\RequirePackage{fix-cm}
\documentclass[a4paper,12pt]{article}       % onecolumn (second format)

\usepackage{graphicx}
\usepackage[english]{babel}
\usepackage{amsthm}
\usepackage{amsmath}
\usepackage{amsfonts}
\usepackage{amssymb}
\usepackage{bm}

\newtheorem{teo}{Theorem}
\newtheorem{rmk}{Remark}
\newtheorem{lema}{Lemma}

\newtheorem{cor}{Corollary}

\begin{document}

\title{\textbf{On projections of arbitrary lattices} \footnotetext{Work partially supported by FAPESP under grants 2011/22044-4, 2011/01096-6, 2007/56052-8 and by CNPq under grant 309561/2009-4. \\ \indent \indent Antonio C. de A. Campello Jr. and Sueli I. R. Costa are with Institute of Mathematics, Statistics and Computer Science, University of Campinas, 13083-859, Campinas - SP, Brazil. Jo\~ao Eloir Strapasson is with School of Applied Science, University of Campinas, 13484-350, Limeira - SP, Brazil. E mails: campello@ime.unicamp.br, sueli@ime.unicamp.br, joao.strapasson@fca.unicamp.br}
}%
%\titlerunning{Short form of title}        % if too long for running head

\author{Antonio Campello, \,\,
        Jo\~ao Strapasson, \,\, Sueli Costa
}

%\authorrunning{Short form of author list} % if too long for running head

% The correct dates will be entered by the editor

\date{}
\maketitle
\begin{abstract}
%% Text of abstract
In this paper we prove that given any two point lattices $\Lambda_1 \subset \mathbb{R}^n$ and $ \Lambda_2 \subset \nobreak \mathbb{R}^{n-k}$, there is a set of $k$ vectors $\bm{v}_i \in \Lambda_1$ such that $\Lambda_2$ is, up to similarity, arbitrarily close to the projection of $\Lambda_1$ onto the orthogonal complement of the subspace spanned by $\bm{v}_1, \ldots, \bm{v}_k$. This result extends the main theorem of \cite{Sloane2} and has applications in communication theory.
\end{abstract}

%%
%% Start line numbering here if you want
%%
% \linenumbers

%% main text\section{Introduction}
It was recently proved \cite{Sloane2} that any $(n-1)$-dimensional lattice can be approximated by a sequence of lattices such that each element is, up to similarity, the orthogonal projection of $\mathbb{Z}^{n}$ onto a hyperplane determined by a linear equation with integer coefficients. As a consequence of this fact, such projections can achieve packing densities arbitrarily close to the one of the best lattice packing in $\mathbb{R}^{n-1}$. A natural question that arises from this result is whether it still holds for other lattices than $\mathbb{Z}^{n}$. We give a positive answer to this question by showing that any $(n-k)$-dimensional lattice can be approximated by sequences of projections of \textit{any} lattice in $\mathbb{R}^n$, generalizing the main theorem of \cite{Sloane2}. The main result of this paper is the following:

\begin{teo}Let $\Lambda_1$ be a $n$-dimensional lattice and $\Lambda_2$ a $(n-k)$-dimensional lattice with Gram matrix $A$. Given $\varepsilon > 0$, there exists a set of vectors $\left\{ \pmb{v}_1, \ldots, \pmb{v}_n \right\} \subset \Lambda_1$, a Gram matrix $A_{V}$ for $\Lambda_{V}$ (the projection of $\Lambda_1$ onto the orthogonal complement of the subspace $V$ spanned by the vectors $\bm{v}_i$) and a number $c$ such that:
\begin{equation}
\left\| A - c A_V \right\| < \varepsilon.
\end{equation} 
\label{teo:main}
\end{teo}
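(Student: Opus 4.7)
The plan is to proceed by induction on $k$, with the base case $k=1$---a direct extension of \cite{Sloane2} from $\mathbb{Z}^n$ to arbitrary $\Lambda_1$---carrying the technical weight. The inductive step is essentially formal, relying on the fact that orthogonal projections compose.

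For the base case, I would parameterize the candidate projection direction as $\pmb{v} = m^T B_1$, where $B_1$ is a basis matrix of $\Lambda_1$ and $m\in\mathbb{Z}^n$ is primitive. Completing $m$ to a unimodular matrix $M \in GL_n(\mathbb{Z})$ whose first row is $m^T$ and whose remaining $(n-1)\times n$ block I call $M'$, a Schur-complement computation shows that a Gram matrix of $\Lambda_V = P_{\pmb{v}^\perp}(\Lambda_1)$ is
\[
G_V \;=\; M' G_1 M'^T \;-\; \frac{(M'G_1 m)(M'G_1 m)^T}{m^T G_1 m},
\]
with $G_1 = B_1 B_1^T$. The task is to choose $(m,M')$ so that $G_V\approx cA$ for some $c>0$. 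Following the spirit of \cite{Sloane2}, I would fix an isometric embedding of $\Lambda_2$ into a candidate hyperplane and run a Dirichlet-type simultaneous approximation argument: the density of primitive directions in $\Lambda_1$ on the unit sphere, combined with the freedom in choosing $M'$ and letting $\|\pmb{v}\|$ grow, provides enough room to approximate any target $cA$.

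For the inductive step, assume the statement for $k-1$. Given $\Lambda_1\subset\mathbb{R}^n$ and $\Lambda_2\subset\mathbb{R}^{n-k}$, pick any primitive $\pmb{v}_1\in\Lambda_1$ and let $\Lambda_1' := P_{\pmb{v}_1^\perp}(\Lambda_1)$, an $(n-1)$-dimensional lattice. Invoke the inductive hypothesis for $\Lambda_1'$ and $\Lambda_2$ to obtain $\pmb{v}_2',\ldots,\pmb{v}_k'\in\Lambda_1'\subset\pmb{v}_1^\perp$ and $c>0$ such that the projection of $\Lambda_1'$ onto the orthogonal complement of $\mathrm{span}(\pmb{v}_2',\ldots,\pmb{v}_k')$ inside $\pmb{v}_1^\perp$ has Gram matrix within $\varepsilon$ of $cA$. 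Lift each $\pmb{v}_j'$ to any $\pmb{v}_j\in\Lambda_1$ with $P_{\pmb{v}_1^\perp}(\pmb{v}_j)=\pmb{v}_j'$. Since $\pmb{v}_j$ and $\pmb{v}_j'$ differ by a multiple of $\pmb{v}_1$, the subspace $V=\mathrm{span}(\pmb{v}_1,\ldots,\pmb{v}_k)$ decomposes orthogonally as $\mathbb{R}\pmb{v}_1\oplus\mathrm{span}(\pmb{v}_2',\ldots,\pmb{v}_k')$; composing the two projections then shows that $P_{V^\perp}(\Lambda_1)$ coincides with the projection of $\Lambda_1'$ onto the orthogonal complement of $\mathrm{span}(\pmb{v}_2',\ldots,\pmb{v}_k')$ inside $\pmb{v}_1^\perp$, which delivers the required approximation.

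The main obstacle is the base case. For $\mathbb{Z}^n$ the Schur complement takes a particularly clean form with $G_1=I$, and \cite{Sloane2} exploits classical simultaneous Diophantine approximation to produce the required $m$; for a general $G_1$ the same approximation must be carried out with respect to the bilinear form that $G_1$ defines, which is where the technical weight of the proof lies.
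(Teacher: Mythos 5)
Your inductive step is sound and is a legitimately different organization from the paper's proof: since each lifted $\pmb{v}_j$ differs from $\pmb{v}_j'$ by a multiple of $\pmb{v}_1$, the span $V$ decomposes as $\mathbb{R}\pmb{v}_1\oplus\mathrm{span}(\pmb{v}_2',\ldots,\pmb{v}_k')$, projections compose, and $P_{V^{\perp}}(\Lambda_1)$ is indeed the projection of $\Lambda_1'=P_{\pmb{v}_1^{\perp}}(\Lambda_1)$ onto the relative orthogonal complement. But this reduction transfers the \emph{entire} content of the theorem to the base case $k=1$, and there you have not given an argument. Invoking ``a Dirichlet-type simultaneous approximation'' and asserting that the density of primitive directions ``provides enough room'' names the difficulty rather than resolving it. Worse, the heuristic is contradicted by a parameter count: once $\pmb{v}=m^TB_1$ is fixed, the projected lattice is fixed, and the choice of $M'$ only changes $G_V$ by a unimodular congruence $U G_V U^t$; so the similarity class of $P_{\pmb{v}^{\perp}}(\Lambda_1)$ is governed by the $n$ integers in $m$ (essentially $n-1$ after scaling), while the space of similarity classes of $(n-1)$-dimensional lattices has dimension $n(n-1)/2-1$. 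For $n\geq 4$ no pigeonhole or volume argument ``has enough room''; the density of this countable family is an arithmetic phenomenon that requires exhibiting, for each target, a specific $\pmb{v}$ \emph{and} a specific basis of the projection in which the Gram matrix is visibly close to $cA$.

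That is exactly what the paper supplies, and what is missing from your proposal. The paper passes to the dual via Lemma \ref{lemma:inter} ($\Lambda_V^{*}=V^{\perp}\cap\Lambda^{*}$, generated explicitly by $M=[-G_3^{-t}\hat{V}^tG_1^{-t}\,|\,G_3^{-t}]$ after upper-triangularizing $G$), and then defines the integer data by the floor-function matrix $L_w^{*}$ of Equation \eqref{eq:construction}: the unimodular factor $H_w$ is arranged so that $H_w^{-1}L_w^{*}$ has the form of $M$ for an explicit integral $\hat{A}$ (hence an explicit primitive $V$), while the basis $L_w^{*}$ itself satisfies $L_w^{*}/w\to[L^{*}\,|\,\pmb{0}]$. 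Note also that the proof in \cite{Sloane2} is this same kind of explicit lifting construction tied to the identity $\Lambda_{\pmb{v}}^{*}=\pmb{v}^{\perp}\cap\mathbb{Z}^n$, not a soft approximation argument that transfers formally to the bilinear form $G_1$; carrying it over to a general Gram matrix is precisely the new work of this paper. As written, your proposal establishes only the easy half of the theorem, and the base case still needs an analogue of Lemma \ref{lemma:inter} together with the construction \eqref{eq:construction} (or some substitute for them).
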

Let $\Lambda$ be a lattice in $\mathbb{R}^{n}$. Theorem 1 implies, for instance, that the search for good $(n-k)$-dimensional lattice packings can be regarded as a search for vectors $\bm{v}_i$ in $\Lambda$ such that the projection of $\Lambda$ onto $V^{\perp}$ has good density. It is worth remarking that good lower bounds on the existence of dense projection lattices were derived in previous works (see \cite{Sueli} and \cite{FatStrut}) through only geometric arguments. Furthermore, the approximation of an arbitrary lattice by a sequence of lattices with additional structure is a technique that has found useful applications in the context of sphere packings. For instance, dense subsets of lattices (in the sense of \cite[p. 126]{Cassels}) were previously studied in \cite[Ch. 1]{Leker}, \cite[Ch. 4]{Rogers}, \cite{Schmidt}, and are important for the establishment of the celebrated Minkowski-Hlawka lower bound on the existence of dense lattice packings \cite[Ch. 4]{Rogers},
\cite[p. 14]{Sloane1}. In a more general context, periodic packings are used to prove sharp bounds for the density of the best sphere packing (not necessarily a lattice packing) in \cite{Cohen}.

Projection lattices naturally arise in the context of lattice packings. The densest packing in two dimensions, $A_2$, is equivalent to the projection of $\mathbb{Z}^n$ onto $(1,1,1)^{\perp}$ and, in general, $A_n^*$ is the projection of $\mathbb{Z}^n$ onto $(1,\ldots,1)^{\perp}$.  Furthermore, the densest known packings in dimensions $6$ and $7$ ($E_6$ and $E_7$) can be defined as the intersection of the so-called Gosset lattice $E_8$ with certain hyperplanes determined by minimal vectors in $E_8$ \cite{Sloane1}, hence the duals $E_7^*$ and $E_6^*$ are exact projections of $E_8$. The problem of finding projections of $\mathbb{Z}^{n}$ with good packing density arises in the communication framework linked to error control for continuous alphabet sources, which is described in \cite{Sueli}. In \cite{CurvesTori}, it is discussed how more general projections as presented here can be applied to this communication problem.

The proof of our main result, Theorem 1, is constructive and follows similar lines to the ones of \cite{Sloane2}. For this proof, a characterization of primitive subsets in a lattice given in the next section is fundamental. The same characterization was recently used in \cite{Flores} to make possible constructions of new record dense packings in some dimensions. The construction presented in Equation \eqref{eq:construction} is a generalization of the construction in Section V of \cite{FatStrut}, what leads to a result for general lattices and projections onto subspaces of higher codimension, extending what is done for $\mathbb{Z}^n$ in \cite{Sloane2}. Examples and further questions are presented in Sections 4 and 5.

\section{Preliminaries}

Let $\left\{\bm{g}_1, \ldots, \bm{g}_m\right\}$ be a set of $m$ linearly independent vectors in $\mathbb{R}^n$. A (point) \textit{lattice} $\Lambda \subset \mathbb{R}^n$ with basis $\left\{\bm{g}_1, \ldots, \bm{g}_m\right\}$ is defined as the set:

\begin{equation*}
\Lambda = \left\{ \alpha_1 \bm{g}_1 + \ldots + \alpha_m \bm{g}_m \,\, | \,\, \alpha_1, \ldots, \alpha_m \in \mathbb{Z} \right\}.
\end{equation*}

A matrix $G$ whose rows are the basis vectors $\bm{g}_i$ is said to be a \textit{generator matrix} for $\Lambda$. The matrix $A = GG^t$ is called a \textit{Gram matrix} for $\Lambda$ and the value $\det \Lambda = \det G G^t$ is the \textit{determinant} or \textit{discriminant} of $\Lambda$. Two matrices $G$ and $\hat{G}$ generate the same lattice if there is a unimodular matrix $U$ such that $G = U \hat{G}$. Although a lattice has infinitely many bases, the value $\det \Lambda$ is an invariant under change of basis. We say that a set of vectors $\left\{\bm{v}_1, \ldots, \bm{v}_k \right\} \subset \Lambda$ is \textit{primitive} if it can be extended to a basis $\bm{v}_1, \ldots, \bm{v}_k, \bm{v}_{k+1},\ldots, \bm{v}_m$ of $\Lambda$. If $\bm{v}_i = \bm{a}_i G$, $\bm{a}_i \in \mathbb{Z}^m$, then a necessary and sufficient condition for a set of vectors to be primitive is that the gcd of the $k \times k$ minor determinants of the matrix $[ \bm{a}_1^{t}, \bm{a}_2^t \ldots \bm{a}_k^{t}]$ equals $\pm 1$ \cite{Cassels}.

Two lattices are said \textit{equivalent} if there exists a similarity transformation that takes on into another. Equivalently, two lattices with generator matrices $G_1$ and $G_2$ are equivalent if there exists an unimodular matrix $U$, an orthogonal matrix $Q$ and a nonzero number $c$ such that $G_1 = c\,\, U \,\, G_2 \,\, Q$. Equivalent lattices have the same density, as well as other geometric properties (see \cite{Sloane1} for undefined terms). 

The dual lattice $\Lambda^*$ is defined as:

\begin{equation*}
\Lambda^* = \left\{ \bm{x} \in \mbox{span}({G}); \langle \bm{x}, \bm{y} \rangle \in \mathbb{Z}, \forall \bm{y} \in \Lambda \right\},
\end{equation*}
where $\mbox{span}({G}) = \{ \bm{x} G; \bm{x} \in \mathbb{R}^m \}$. If $G$ is a generator matrix for $\Lambda$, then $(GG^t)^{-1} G$ generates $\Lambda^*$, hence $\det \Lambda = (\det \Lambda^*)^{-1}$. We say that $\Lambda_2$ is a projection lattice (of $\Lambda_1 \subset \mathbb{R}^n$) if it is obtained by projecting all vectors of $\Lambda_1$ onto some subspace $H \subset \mathbb{R}^n$.

Given a matrix $M$, we denote $\left\| M \right\|_{\infty} = \max_{i,j} |M_{ij}|$. The $n \times n$ identity matrix is denoted by $I_n$. The standard big O notation is adopted in this paper i.e., given two functions $f(w)$ and $g(w)$ we say that $f(w)=  \nobreak O(g(w))$ if there is a constant $M$ and $w_0 > 0$ such that $|f(w)| \leq M |g(w)|$ for all $w > w_0$.

%An useful property of matrix block determinants that  be used several times in this work is the following. If $A$ and $D$ are square matrices and $A$ is non-singular then:
%
%\begin{equation}
%\det \left[ \begin{array}{cc}
%A & B \\
%C & D
%\end{array} \right] = (\det A)(\det (D - C A^{-1} B)).
%\label{eq:detBlocks}
%\end{equation}
%
%A consequence of this formula is that if $A$ is an $n\times n$ invertible matrix and $C, D$ are $n \times k $ matrices, then 
%
%\begin{equation} \det{(A + CD^t)} = \det(A) \det{( I_k + D^t A^{-1}C)}.
%\label{eq:detSoma}
%\end{equation}

%For submatrices we will use the following notation. If $k > 0$ ($l > 0$), $M_{[k,l]}$ will denote the submatrix constituted by the \textit{first} $k$ rows ($l$ columns) of $M$. If $k < 0$ ($l < 0$) then $M_{[k,l]}$ will stand for the \textit{last} $k$ rows ($l$ columns) of $M$.

\section{Main Result}
Let $\Lambda$ be any $n$-dimensional lattice with generator matrix $G$ and let $\left\{\pmb{v}_1,\ldots,\pmb{v}_k\right\}$ be a primitive set of vectors in $\Lambda$. If we denote by $V$ the matrix whose rows are the vectors $\pmb{v}_i$, then an orthogonal projector onto $V^{\perp}$ (the orthogonal complement of the subspace spanned by the vectors $\pmb{v}_i$) is given by:

$$P = I_{n} - V^t (V.V^t)^{-1} V.$$

Since $V$ is a primitive set of vectors, it can be extended to a basis of $\Lambda$ i.e., if $V = A G$ for $A \in \mathbb{Z}^{k \times n}$, there is a matrix $U \in \mathbb{Z}^{(n-k) \times n}$ such that $\Lambda$ is also generated by:

\begin{equation}
\left[ \begin{array}{c}
A \\
U
\end{array} \right] G = \left[ \begin{array}{c}
V \\
U \, G
\end{array} \right].
\label{eq:Unimod}
\end{equation}

As a generator matrix for $\Lambda_{V}$, the projection of $\Lambda$ onto $V^{\perp}$, we can choose:

\begin{equation}
G_{V} = U G \left(I_{n} - V^t (V.V^t)^{-1} V \right),
\end{equation}
which corresponds to the last $n-k$ rows of the product of the matrix \eqref{eq:Unimod} by $P$. We have the following lemma:

\begin{lema}
The discriminant of $\Lambda_{V}$ is given by:

\begin{equation}
\det{\Lambda_{V}} = \frac{\det \Lambda}{\det (V V^t)}.
\label{eq:discSpec}
\end{equation}
\end{lema}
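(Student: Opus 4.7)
The plan is to compute $\det \Lambda$ from the generator matrix $\tilde{G} = \begin{pmatrix} V \\ UG \end{pmatrix}$ of $\Lambda$ given in \eqref{eq:Unimod}, and then reduce the resulting Gram matrix to a block-diagonal form whose two diagonal blocks are exactly $VV^t$ and $G_V G_V^t$.

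First, I would use the decomposition $UG = UG P + UG(I_n - P) = G_V + MV$, where $M = UGV^t(VV^t)^{-1}$, to observe that subtracting $M$ times the top $k$ rows of $\tilde{G}$ from its bottom $n-k$ rows yields
\begin{equation*}
L\tilde{G} = \begin{pmatrix} V \\ G_V \end{pmatrix}, \qquad L = \begin{pmatrix} I_k & 0 \\ -M & I_{n-k}\end{pmatrix}.
\end{equation*}
Since $\det L = 1$ and $\det \Lambda = \det(\tilde{G}\tilde{G}^t)$, we have $\det \Lambda = \det\bigl((L\tilde{G})(L\tilde{G})^t\bigr)$.

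Next, I would expand this as the block matrix
\begin{equation*}
\det \Lambda = \det \begin{pmatrix} VV^t & VG_V^t \\ G_V V^t & G_V G_V^t \end{pmatrix}
\end{equation*}
and use the crucial identity $VP = V - VV^t(VV^t)^{-1}V = 0$. Since $G_V^t = P^t G^t U^t = P G^t U^t$, this gives $VG_V^t = (VP)G^t U^t = 0$, so the off-diagonal blocks vanish and the determinant factorizes as $\det(VV^t)\det(G_V G_V^t) = \det(VV^t)\det \Lambda_V$, yielding \eqref{eq:discSpec}.

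There is no real obstacle here; the only point that requires care is verifying that the row-reduction step is legitimate, i.e.\ that the real (non-integer) matrix $M$ can be used without affecting the value of $\det(\tilde{G}\tilde{G}^t)$. This is justified by the general fact that the Gram determinant is invariant under unit-determinant left multiplications, regardless of whether the entries are integers, so the argument goes through cleanly.
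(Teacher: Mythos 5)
Your proof is correct and takes essentially the same route as the paper's: both compute $\det\Lambda$ from the Gram matrix of the extended generator matrix in \eqref{eq:Unimod} and factor it as $\det(VV^t)\,\det(G_V G_V^t)$. The only difference is presentational — the paper invokes the block-determinant (Schur complement) formula directly, while you re-derive that step via the explicit unit-determinant row operation $UG \mapsto UG - MV = G_V$ together with the orthogonality $VP = 0$.
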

\begin{proof}
From the equality \eqref{eq:Unimod}, we have 

\begin{equation*} \begin{split} \det \Lambda &= \det \left[ \begin{array}{c}
V \\
U \, G
\end{array} \right] . \left[ \begin{array}{c}
V \\
U \, G
\end{array} \right]^{t} = \det \left[ \begin{array}{cc}
V V^t & V G^t U^t \\
U G V^t & U U^t
\end{array} \right] = \\[1\baselineskip]
&\stackrel{(a)}{=} (\det V V^t) \det \left( UGG^tU^t - UG V^t (V V^t)^{-1} V G^t U^t \right) = (\det VV^t) \det \Lambda_{V},
\end{split} 
\end{equation*}
where the equality $(a)$ follows from evaluating the determinant by blocks.
\end{proof}

\noindent Next, we assume without loss of generality that $G$ is upper triangular:
\begin{equation}
G = \left[
\begin{array}{cc} 
G_1 & G_2 \\
0 & G_3
 \end{array} \right],
\label{eq:Gespc}
\end{equation}
with $G_1$ and $G_3$ upper triangular matrices with dimensions $k \times k$ and $(n-k) \times (n-k)$, respectively. Any generator matrix can be put on that form through a similarity transformation, thus generating an equivalent lattice. In other words, for any generator matrix $G^\prime$ for $\Lambda$, we can obtain an orthogonal matrix $Q$ and an upper triangular $R$ such that $G' = RQ$ for instance, via a RQ factorization \cite{Meyer} (or a Gram-Schmidt orthogonalization on its rows starting from the last one). This way, the lattices generated by $G'$ and $R$ are equivalent, and we can set $G = R$.

Now, suppose that $A = \left[I_k \,\, \pmb{|} \,\, \hat{A} \right] \in \mathbb{Z}^{k \times n}$ and $V = \left[G_1 \,\, \pmb{|} \,\, \hat{V} \right] = A G$. If we consider the matrix
\begin{equation}
M = \left[ - G_3^{-t} \,\,  \hat{V}^t \,\,  G_1^{-t}  \,\, \pmb{|} \,\, G_3^{-t} \right],
\label{eq:MImport}
\end{equation}
then we have the following lemma:

\begin{lema}
Let $\Lambda$ be a lattice with generator matrix \eqref{eq:Gespc} and let $\Lambda_{V}$ be the projection of $\Lambda$ onto $V^{\perp}$ where

$$V = \left[G_1 \,\, \pmb{|} \,\, \hat{V} \right] = \left[G_1 \,\, \pmb{|} \,\, G_2 + \hat{A} G_3 \right],$$ with $\hat{A} \in \mathbb{Z}^{k \times (n-k)}$. If $\Lambda (M)$ is the lattice generated by the rows of the matrix $M$ in Equation \eqref{eq:MImport}, then:

\begin{equation} \Lambda(M) = V^{\perp} \cap \Lambda^* = \Lambda_{V}^*
\end{equation}
\label{lemma:inter}
\end{lema}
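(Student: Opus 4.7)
My plan is to prove the double equality by first verifying the two inclusions $\Lambda(M) \subset V^{\perp}$ and $\Lambda(M) \subset \Lambda^*$ via direct block computations, then identifying $V^{\perp} \cap \Lambda^*$ with $\Lambda_V^*$ using a standard duality fact, and finally upgrading the inclusion $\Lambda(M) \subset V^{\perp} \cap \Lambda^*$ to equality by comparing determinants.

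First I would check that the rows of $M$ lie in $V^{\perp}$ by computing
\[
V M^t = [G_1 \mid \hat{V}]\begin{bmatrix} -G_1^{-1}\hat{V} G_3^{-1} \\ G_3^{-1}\end{bmatrix} = -\hat{V} G_3^{-1} + \hat{V} G_3^{-1} = 0.
\]
Then I would show $\Lambda(M) \subset \Lambda^*$. Since $\Lambda^* = \{\bm{x} \in \mathbb{R}^n : \bm{x} G^t \in \mathbb{Z}^n\}$, I would compute $M G^t$ directly using the block form of $G^t$ and the relation $\hat{V} = G_2 + \hat{A} G_3$, which yields $M G^t = [-\hat{A}^t \mid I_{n-k}]$, an integer matrix since $\hat{A} \in \mathbb{Z}^{k \times (n-k)}$.

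Next I would invoke (or briefly justify) the standard duality identity $V^{\perp} \cap \Lambda^* = \Lambda_V^*$, which holds whenever the generators of $V$ form a primitive set of $\Lambda$: extending $\{\bm{v}_1,\ldots,\bm{v}_k\}$ to a basis of $\Lambda$ and dualizing shows that the orthogonal projection onto $V^{\perp}$ of the dual basis coincides with intersecting $\Lambda^*$ with $V^{\perp}$. This takes care of the second equality for free.

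The main step left is equality of $\Lambda(M)$ and $V^{\perp} \cap \Lambda^*$. Since both are full-rank lattices in $V^{\perp}$ and the former is contained in the latter, it suffices to check they have the same discriminant. By Lemma~1, $\det \Lambda_V^* = (\det \Lambda_V)^{-1} = \det(VV^t)/\det \Lambda$, while
\[
MM^t = G_3^{-t}\bigl(\hat{V}^t (G_1 G_1^t)^{-1} \hat{V} + I_{n-k}\bigr) G_3^{-1}.
\]
Using the identity $\det(I_{n-k}+\hat{V}^t(G_1G_1^t)^{-1}\hat{V}) = \det(G_1 G_1^t)^{-1}\det(G_1 G_1^t + \hat{V}\hat{V}^t) = \det(G_1G_1^t)^{-1}\det(VV^t)$ together with $\det\Lambda = \det(G_1G_1^t)\det(G_3G_3^t)$ (since $G$ is block upper triangular), one obtains $\det(MM^t) = \det(VV^t)/\det\Lambda$, matching $\det\Lambda_V^*$. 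The main obstacle is this last determinant bookkeeping — the orthogonality and integrality checks are essentially one-line block computations, while the determinant equality requires the Schur-complement/Sylvester identity to convert between the $k \times k$ and $(n-k)\times(n-k)$ minors.
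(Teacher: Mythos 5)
Your proof is correct and follows essentially the same route as the paper's: the same block computations give $\Lambda(M)\subseteq V^{\perp}\cap\Lambda^*$, and the same Sylvester/Schur determinant evaluation $\det(MM^t)=\det(VV^t)/\det\Lambda$ combined with Lemma~1 closes the argument. The only (minor) organizational difference is that the paper proves just the easy inclusion $V^{\perp}\cap\Lambda^*\subseteq\Lambda_V^*$ via the projector and lets the discriminant comparison across the chain $\Lambda(M)\subseteq V^{\perp}\cap\Lambda^*\subseteq\Lambda_V^*$ force all the equalities at once, whereas you import the full identity $V^{\perp}\cap\Lambda^*=\Lambda_V^*$ as a standard duality fact whose dual-basis justification you only sketch --- legitimate (the paper's own remark cites it in greater generality), but that sketch is the one place a referee might ask you to fill in details.
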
	

\begin{proof}
We first prove that $\Lambda(M) \subseteq V^{\perp} \cap \Lambda^*$. Let $\pmb{x} \in \Lambda(M)$ i.e., $\pmb{x} = \pmb{u} M$ for $\pmb{u} \in \mathbb{Z}^k$. Then 
$$\pmb{x} V^t = \pmb{u} \left[ - G_3^{-t} \,\,  \hat{V}^t \,\,  G_1^{-t}  \,\, \pmb{|} \,\, G_3^{-t} \right] \left[G_1 \,\, \pmb{|} \,\, \hat{V}\right]^t = \pmb{u} (- G_3^{-t} \,\,  \hat{V}^t + G_3^{-t} \hat{V}^t) = \pmb{0}_{k \times n},$$
hence $\pmb{x} \in V^{\perp}$. Also, if $\pmb{y} = \pmb{w} G$ is an element of $\Lambda$ then

$$\left\langle\bm{x}, \pmb{y} \right\rangle = \pmb{w} G M^t \pmb{u}^t = \pmb{w} \left[\begin{array}{c} -\hat{V} G_3^{-1} + G_2 G_3^{-1} \\ I_{n-k} \end{array} \right] \pmb{u}^t = \pmb{w} \left[\begin{array}{c} \hat{A} \\ I_{n-k} \end{array} \right] \pmb{u}^t \in \mathbb{Z},$$
therefore $\pmb{x} \in \Lambda^*$, proving the inclusion.

Now, we will prove that $V^{\perp} \cap \Lambda^* \subseteq \Lambda_{V}^*$. Let $\pmb{x} \in V^{\perp} \cap \Lambda^*$ and let $P$ be a projector onto $V^{\perp}$. Any element in $\Lambda_{V}$ is given by $\pmb{u} P$ where $\pmb{u} \in \Lambda$. Hence:

$$ \langle \pmb{x}, \pmb{u} P \rangle = \pmb{u} P \pmb{x}^t = \pmb{u} \pmb{x}^t \in \mathbb{Z},$$
since $\pmb{u} \in \Lambda$ and $\pmb{x} \in \Lambda^*$. So far, we have:

$$\Lambda(M) \subseteq V^{\perp} \cap \Lambda^* \subseteq \Lambda_{V}^*$$
Evaluating the discriminant of $\Lambda(M)$:
\begin{equation*}
\begin{split}\det \Lambda(M) &= \det M M^t = \det (G_3^{-t} \hat{V}^t G_1^{-t} G_1^{-1} \hat{V} G_3^{-1}+G_3^{-t}G_3^{-1}) \\
& = \det({G_3^{-t} G_3^{-1}}) \det (\hat{V}^t G_1^{-t} G_1^{-1} \hat{V} + I) \\
& = \det({G_3^{-t} G_3^{-1}}) \det (G_1^{-1} \hat{V} \hat{V}^t G_1^{-t} + I) \\
& = \det({G_3^{-t} G_3^{-1}}) \det(G_1^{-t} G_1^{-1}) \det (\hat{V} \hat{V}^t + G_1 G_1^t) = \frac{\det{(V V^t)}}{\det{\Lambda}}
\end{split}
\end{equation*} i.e., $\Lambda(M)$ is a sublattice of $\Lambda_{V}^*$ and has the same discriminant, therefore the equality  $\Lambda(M) = \Lambda_{V}^*$ holds.
\end{proof}

\begin{rmk}The second equality of this lemma, namely $V^{\perp} \cap \Lambda^* = \Lambda_{V}^*$, actually holds in a more general form as can be seen in \cite[\S 1.3]{Perfect}. In fact, the equality $V^{\perp} \cap \Lambda^* = \Lambda_{V}^*$ can be seen as a consequence of Lemma 1 of this paper combined with Theorem 4 in \cite[Ch.
6.2]{Sloane1}.
\end{rmk}

%\begin{exe} Let $\Lambda = E_8$ be the well-known Gosset lattice \cite[\S 4.8]{Sloane1}. The lattices $E_6$ and $E_7$ are defined as:
%
%\begin{equation*}
%E_6 = \left\{ \pmb{x} \in E_8 : \langle \pmb{x}, \pmb{v}_1 \rangle = 0 \right\} = E_8 \cap \pmb{v_1}^{\perp}
%\end{equation*}
%\begin{equation*}
%\begin{split}
%E_7 &= \left\{ \pmb{x} \in E_8 : \langle \pmb{x}, \pmb{v}_2 \rangle = 0 \mbox{ and } \langle \pmb{x}, \pmb{v_3} \rangle = 0 \right\}\\
%&= E_8 \cap \mbox{\textnormal{ span}} {(\pmb{v}_2, \pmb{v}_3)}^{\perp},
%\end{split}
%\end{equation*}
%where $\pmb{v}_1$ is any minimal vector in $E_8$ and $\left\{\pmb{v}_2, \pmb{v}_3 \right\}$ is any set of minimal vectors in $E_8$ such that the lattice generated by $\pmb{v}_2$ and $\pmb{v}_3$ is equivalent to $A_2$.
%Through elementary operations it is possible to obtain a generator matrix for $E_8$ in form \eqref{eq:Gespc} and choose $\pmb{v}_1, \pmb{v}_2$ and $\pmb{v}_3$ in such a way to satisfy the hypotheses of Lemma \eqref{lemma:inter}. Hence:
%
%$$E_7 = E_8 \cap \pmb{v}_1^{\perp} = E_8^* \cap \pmb{v}_1^{\perp} = \Lambda_{\pmb{v}_1}^* \Rightarrow \Lambda_{\pmb{v}_1} = E_7^* \mbox{ and }$$
%$$E_6 = E_8 \cap V^{\perp} = E_8^* \cap V^{\perp} = \Lambda_V^* \Rightarrow \Lambda_V = E_6^*$$
%for $V = \left[\begin{array}{c} \pmb{v}_2 \\ \pmb{v}_3 \end{array}\right]$.
%\label{E7E8}
%\end{exe}

Keeping in mind these two lemmas, we consider the following construction:

Let $\Lambda_2$ be a target $(n-k)$-dimensional lattice and $L^*$ a lower triangular $(n-k) \times (n-k)$ generator matrix for its dual $\Lambda_2^*$. Let $\Lambda_1$ be a lattice with generator matrix in form \eqref{eq:Gespc}. First we define the extended matrix of the target lattice as
\begin{equation}
\begin{split}
\bar{L}_{(n-k)\times n}^* &:= \left[L^* \,\, \pmb{|} \,\, \pmb{0}_{(n-k) \times k} \right].
\end{split}
\end{equation}
We also consider the alternative decomposition

\begin{equation}
\bar{L}_{(n-k)\times n}^* = \left[\bar{L}^*_1 \,\, \pmb{|} \,\, \bar{L}^*_2\right],
\label{eq:splitMatrix}
\end{equation}
where $\bar{L}^*_1$ and $\bar{L}^*_2$ have dimensions $(n-k)\times k$ and $(n-k) \times (n-k)$ respectively.  Note that both $\bar{L}^*_1$ and $L^*$ have the same number of rows, $(n-k)$, corresponding to the dimension of the target lattice. On the other hand, unless $k = n/2$, they differ in number of columns. 

Using notation \eqref{eq:Gespc} and \eqref{eq:splitMatrix}, we denote by $H_w$ the matrix
\begin{equation}
H_w := \lfloor w \bar{L}^*_2 G_3^t \rfloor + I_{n-k}
\end{equation}
and define $\Lambda_w^*$ as the lattice generated by the matrix $L_w^*$, where

\begin{equation}
L_w^* := \left[ (L_w^*)_{1} \,\, \pmb{|} \,\, (L_w^*)_{2} \right]  \mbox{,}
\label{eq:construction}
\end{equation}
\begin{equation*}
(L_w^*)_{1} = \left( \lfloor w \bar{L}_{1}^*  G_1^{t} + H_w G_3^{-t} G_2^t \rfloor - H_w G_3^{-t} G_2^t\right)G_1^{-t} \mbox{ and }
\end{equation*}
\begin{equation*}
(L_w^*)_{2} = H_w G_3^{-t}.
\end{equation*}
In what follows, we will prove that:

\begin{enumerate}
\item[(i)] $\Lambda_w^*$ is equivalent to the dual of a lattice which is the projection of $\Lambda_1$ onto $V^{\perp}$ for some matrix $V$ such that its rows $\pmb{v}_i \in \Lambda_1$, for $i = 1, \ldots, k$.
\item[(ii)] $\displaystyle \frac{L_w^* L_w^{*t}}{w^2} \rightarrow L^* L^{*t} $ as $w \rightarrow \infty$.
\end{enumerate}

To prove the first statement, we observe that, since $L^*$ and $G_3^t$ are lower triangular matrices and the diagonal entries of $\bar{L}^*_2$ are zero, $H_w$ is a lower triangular integer matrix with all diagonal elements equal to one. Hence, $H_w$ is unimodular and so is $H_w^{-1}$. Thus, each $\Lambda_w^*$ is also generated by the matrix $H_w^{-1} L_w^*$. Evaluating the matrix product, we have:

\begin{equation}
\begin{split}
H_w^{-1} L_w^* &= \left[ H_w^{-1} (L_w^*)_1 \,\, \pmb{|} \,\, G_3^{-t} \right] \\
&= \left[ \left( H_w^{-1} \lfloor w \bar{L}_1^*  G_1^{t} + H_w G_3^{-t} G_2^t \rfloor - G_3^{-t} G_2^t\right)G_1^{-t} \,\, \pmb{|} \,\, G_3^{-t} \right] \\
& = \left[  -\hat{A}^t G_1^{-t} - G_3^{-t} G_2^t G_1^{-t} \,\, \pmb{|} \,\, G_3^{-t} \right] \\
& = \left[  -G_3^{-t} \hat{V}^t G_1^{-t} \,\, \pmb{|} \,\, G_3^{-t} \right],
\end{split}
\label{eq:evaluateV}
\end{equation}
for $\hat{A}^t = -H_w^{-1} \lfloor w \bar{L}_1^*  G_1^{t} + H_w G_3^{-t} G_2^t \rfloor \in \mathbb{Z}^{(n-k) \times k} $ and $\hat{V}^t = G_2^t + G_3^t \hat{A}^t$. From this and Lemma \eqref{lemma:inter}, we conclude (i) with the matrix $V$ given by

\begin{equation} V = [G_1 \,\, \pmb{|} \,\, G_2 - (H_w^{-1} \lfloor w \bar{L}_1^*  G_1^{t} + H_w G_3^{-t} G_2^t \rfloor)^{t} G_3].
\label{eq:deriveV}
\end{equation}

Now, in order to prove (ii) we start with the following inequalities concerning the floor operation:

\begin{equation*}
\begin{split}
\frac{1}{w} (\lfloor w L_k^*  G_1^{t} + H_w G_3^{-t} G_2^t \rfloor - H_w G_3^{-t} G_2^t)_{ij} & \geq (L_k^*  G_1^{t})_{ij} - \frac{1}{w} \\
\frac{1}{w} (\lfloor w L_k^*  G_1^{t} + H_w G_3^{-t} G_2^t \rfloor - H_w G_3^{-t} G_2^t)_{ij}
& \leq (L_k^*  G_1^{t})_{ij}
\end{split}
\end{equation*}
From this, we obtain:

$$\frac{1}{w} (\lfloor w \bar{L}_1^*  G_1^{t} + H_w G_3^{-t} G_2^t \rfloor - H_w G_3^{-t}G_2^t) \to L_1^*  G_1^{t} \mbox{ as } w \to \infty,$$
hence $(L_w^*)_1/w \to \bar{L}_1^*$. With an analogous argument, it is possible to prove that $(L_w)_2^*/w \to \bar{L}_2^*$, therefore:

\begin{equation}\frac{L_w^*}{w} \to \left[L^* \,\, \pmb{|} \,\, \pmb{0}\right] \Rightarrow \frac{L_w^* L_w^{*t}}{w^2} \to L^* L^{*t} \mbox{ as } w \to \infty.
\end{equation}

Through this construction, we have the following theorem, which is a ``dual'' version of Theorem \eqref{teo:main}.

\begin{teo}Let $\Lambda_1$ be a $n$-dimensional lattice and $\Lambda_2$ a $(n-k)$-dimensional lattice such that its dual has Gram matrix $A^*$. Given $\varepsilon > 0$, there is a matrix $V_{k \times n}$ such that its rows are vectors of $\Lambda_1$ (i.e., $\pmb{v}_i \in \Lambda_1, i=1,\ldots,k.$), a Gram matrix $A_{V}^*$ for $\Lambda_{V}^*$ (the dual of the projection of $\Lambda_1$ onto $V^{\perp}$), and $c \in \mathbb{R}$ such that:

\begin{equation}
\left\| A^* - c A_V^* \right\| < \varepsilon
\end{equation}
\label{teo:dual}
\end{teo}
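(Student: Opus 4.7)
The plan is to read off the desired conclusion directly from the construction \eqref{eq:construction} together with the two properties (i) and (ii) already verified above. Given the target $(n-k)$-dimensional lattice $\Lambda_2$ with dual Gram matrix $A^*$, I would first apply a Cholesky factorization to produce a lower triangular generator matrix $L^*$ for $\Lambda_2^*$ satisfying $L^* L^{*t} = A^*$. I would also invoke the observation following equation \eqref{eq:Gespc} to assume, without loss of generality, that $\Lambda_1$ has a generator matrix $G$ in the upper triangular block form \eqref{eq:Gespc}; any other generator matrix can be reduced to this form via an RQ factorization, which merely replaces $\Lambda_1$ by an equivalent lattice and induces the same orthogonal change of coordinates on every projection.

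Next, for each positive integer $w$ I would feed $L^*$ and $G$ into the construction \eqref{eq:construction} to obtain the lattice $\Lambda_w^*$ with generator matrix $L_w^*$. Property (i), which follows from the computation in \eqref{eq:evaluateV} combined with Lemma \ref{lemma:inter}, identifies $\Lambda_w^*$ with $\Lambda_V^*$ for the explicit matrix $V$ displayed in \eqref{eq:deriveV}; its rows $\pmb{v}_1,\ldots,\pmb{v}_k$ all belong to $\Lambda_1$, since they are of the form $\pmb{v}_i = \pmb{a}_i G$ with $\pmb{a}_i$ integer. Consequently $A_V^* := L_w^* L_w^{*t}$ is a valid Gram matrix for $\Lambda_V^*$.

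Taking $c = 1/w^2$, property (ii) gives $c A_V^* = L_w^* L_w^{*t}/w^2 \to L^* L^{*t} = A^*$ as $w \to \infty$. Given any $\varepsilon > 0$, I would therefore choose $w$ large enough that $\|A^* - c A_V^*\|_\infty < \varepsilon$, which yields the statement.

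Since the construction and both verifications (i) and (ii) are already in hand, I do not anticipate a genuine obstacle; the remaining work is bookkeeping. The only small points to check are that the matrix $V$ in \eqref{eq:deriveV} indeed has the form $A G$ with $A = \left[I_k \,\, \pmb{|} \,\, \hat{A}\right] \in \mathbb{Z}^{k \times n}$, so that its rows really are lattice vectors of $\Lambda_1$, and that when the original generator of $\Lambda_1$ is not triangular the passage to an equivalent triangular $G$ introduces only an orthogonal conjugation, which is absorbed into the similarity freedom built into the statement.
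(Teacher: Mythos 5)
Your proposal is correct and follows essentially the same route as the paper's own proof: choose a lower triangular $L^*$ with $L^*L^{*t}=A^*$, reduce $G$ to the form \eqref{eq:Gespc} by an RQ factorization, take $V$ from \eqref{eq:deriveV} and $c=1/w^2$, and invoke properties (i) and (ii) of the construction \eqref{eq:construction}. The paper likewise handles a non-triangular generator by setting $V=\bar{V}Q^t$ for the orthogonal $Q$ from the factorization, exactly as you describe.
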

\begin{proof}
If the generator matrix of $\Lambda_1$ is given by equation \eqref{eq:Gespc}, we choose a lower triangular matrix $L^*$ for $\Lambda_2^*$ such that $A^* = L^*L^{*t}$, $A_V^* = L_w^*L_w^{*t}$, $V$ as in Equation \eqref{eq:deriveV}, and from the above-described construction we can make $\left\| A^* - 1/w^2 A_V^* \right\|$ as small as we want. Otherwise, given any generator matrix $G'$ for $\Lambda_1$, there is an orthogonal matrix $Q$ such that $G' Q = G$, with $G$ as in equation \eqref{eq:Gespc} and hence, the projection of the lattice generated by $G$ onto $\bar{V}^{\perp}$ is equivalent to the projection of $\Lambda_1$ onto $V^{ \perp}$ for $V = \bar{V} Q^t$. Thus, choosing $c = 1/w^2$ and $V = \bar{V} Q^t$, where $\bar{V}$ equals the right hand side of Equation \eqref{eq:deriveV}, the result follows.
\end{proof}
\begin{rmk} Since a sequence of positive-definite matrices $M_i$ converges to $M$ if and only if the sequence $M_i^{-1}$ converges to $M^{-1}$, Theorem 2 is equivalent to Theorem 1. 
\end{rmk}

\begin{cor}
The convergence rate of the sequences of Gram matrices in Theorem \eqref{teo:dual} is given by:
\begin{equation}
\left\| A^* - c A_V^* \right\|_{\infty} = \left\{ \begin{array}{cl}
   O(1/ \left\| V \right\|_{\infty}^{1/(n-2k+1)}) &\mbox{ if $k < n/2$} \\ \\ 
 O(1/ \left\| V \right\|_{\infty}) &\mbox{ if $k \geq n/2$}

       \end{array} \right.
\end{equation}
\label{cor}
\end{cor}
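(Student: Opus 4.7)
The plan is to track how the error $\|A^* - c A_V^*\|_\infty$ and the norm $\|V\|_\infty$ each scale with the parameter $w$ used in the construction \eqref{eq:construction}, and then eliminate $w$.

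For the error, the floor operation in \eqref{eq:construction} implies $\|L_w^* - w[L^* \,|\, \mathbf{0}]\|_\infty = O(1)$ entrywise, exactly as used to prove property (ii) preceding Theorem~\ref{teo:dual}. Writing $L_w^* L_w^{*t} - w^2[L^*\,|\,\mathbf{0}][L^*\,|\,\mathbf{0}]^t = (L_w^* - w[L^*\,|\,\mathbf{0}])L_w^{*t} + w[L^*\,|\,\mathbf{0}](L_w^{*t} - w[L^*\,|\,\mathbf{0}]^t)$ and dividing by $w^2$, while noting that $L_w^*/w$ has bounded entries, gives $\|L_w^* L_w^{*t}/w^2 - L^* L^{*t}\|_\infty = O(1/w)$. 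So with $c = 1/w^2$ the error is $O(1/w)$, and the whole game is to convert $1/w$ into a decay rate in $\|V\|_\infty$.

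For the norm, Equation~\eqref{eq:deriveV} gives $V = [G_1\,|\,\hat V]$ with $\hat A^t = -H_w^{-1}\lfloor w\bar L_1^* G_1^t + H_w G_3^{-t} G_2^t\rfloor$ and $\hat V = G_2 + \hat A\, G_3$. Since the $G_i$ are fixed and the argument of the floor has $O(w)$ entries, $\|\hat A\|_\infty = O(w\,\|H_w^{-1}\|_\infty)$, so the problem reduces to bounding $\|H_w^{-1}\|_\infty$. When $k \ge n/2$, the block $\bar L_2^*$ lies entirely in the zero padding of $\bar L^* = [L^*\,|\,\mathbf{0}]$, so $\bar L_2^* = \mathbf{0}$, $H_w = I_{n-k}$, and $\|V\|_\infty = O(w)$, yielding $O(1/\|V\|_\infty)$. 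When $k < n/2$, the strictly lower triangular matrix $M := H_w - I_{n-k}$ has $O(w)$ entries, but — because $\bar L_2^*$ has its last $k$ columns zero and both $L^*$ and $G_3^t$ are lower triangular — its nonzero entries lie inside $\{(i,j): j\le n-2k-1,\ i \ge j+k\}$. Each step of a nonzero chain in $M$ then drops the row index by at least $k$, so a nonzero entry of $M^p$ forces $pk \le n-k-1$; combined with the elementary inequality $(n-k-1)/k \le n-2k$ (valid precisely when $k<n/2$, i.e.\ when $n\ge 2k+1$), this gives $M^{n-2k+1}=0$. The finite Neumann expansion $H_w^{-1}=\sum_{p=0}^{n-2k}(-M)^p$ then produces $\|H_w^{-1}\|_\infty = O(w^{n-2k})$ and hence $\|V\|_\infty = O(w^{n-2k+1})$, so solving for $w$ and substituting gives $\|A^* - cA_V^*\|_\infty = O(\|V\|_\infty^{-1/(n-2k+1)})$.

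The main obstacle is the sparsity-plus-nilpotency analysis in the $k<n/2$ case: I have to verify carefully, using the lower-triangularity of $L^*$ and $G_3$, that the support of $M$ really lies inside the claimed region, and then turn that structural constraint into the clean bound $M^{n-2k+1}=0$. The remainder — propagating the $O(1/w)$ error through a Gram-matrix product and eliminating $w$ — is routine bookkeeping.
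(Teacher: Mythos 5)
Your proof is correct and follows the paper's overall strategy: show the Gram-matrix error is $O(1/w)$, show $\left\| V \right\|_{\infty} = O(w)$ when $k \geq n/2$ and $\left\| V \right\|_{\infty} = O(w^{n-2k+1})$ when $k < n/2$, and then eliminate $w$. The one step where you genuinely diverge is the key bound $\left\| H_w^{-1} \right\|_{\infty} = O(w^{n-2k})$ in the case $k < n/2$: the paper gets it by noting that $\det H_w = 1$, so the entries of $H_w^{-1}$ are cofactors of $H_w$, each of order $w^{n-2k}$ because only the first $n-2k$ columns of $H_w - I_{n-k}$ carry entries of order $w$; you instead prove that $M = H_w - I_{n-k}$ is nilpotent with $M^{n-2k+1} = 0$ from the band condition $i \geq j+k$ and sum the finite Neumann series $H_w^{-1} = \sum_{p=0}^{n-2k} (-M)^p$. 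Both routes rest on the same structural fact about the support of $\lfloor w \bar{L}_2^* G_3^t \rfloor$ (inherited from the lower-triangularity of $L^*$ and $G_3^t$ and the zero padding of $\bar{L}^*$), and yours is arguably the more transparent, since the paper's one-line cofactor claim implicitly needs exactly the column-sparsity observation you make explicit. One immaterial slip: the support of $M$ is contained in $\{(i,j) : i \geq j+k,\ j \leq n-2k\}$, not $j \leq n-2k-1$ (the entry in position $(n-k,\, n-2k)$ can be nonzero); your chain argument uses only $i \geq j+k$ together with $1 \leq j$ and $i \leq n-k$, so the conclusion $M^{n-2k+1} = 0$ and everything downstream is unaffected.
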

\begin{proof}From the construction \eqref{eq:construction} above:

$$\left\| L^*L^{*t} - \frac{1}{w^2} L_w^* L_w^{*t} \right\|_{\infty} = O\left(\frac{1}{w}\right)$$
 If  $k > n/2$, then $H_w = H_w^{-1} = I_{n-k}$ and $\left\| V \right\|_{\infty} = O(w)$ \eqref{eq:deriveV}. Otherwise, each co-factor of $H_w$ (thus each element of $H_w^{-1}$) has order $w^{n-2k}$, hence $\left\| V \right\|_{\infty} = O(w^{n-2k+1})$ and the result follows.
\end{proof}

\section{Examples}
\subsection{Projecting $\mathbb{Z}^n$}
As a first example, take $G = I_{n}$ such that $G_1 = I_{k}$, $G_2 = \pmb{0}_{k \times (n-k)}$ and $G_3 = I_{n-k}$. Then, given a $(n-k) \times (n-k)$ lower triangular generator matrix $L^*$ for the dual of a target lattice, if $k < n/2$, we have:

\begin{equation}
A = V = \left[I_{k} \,\, \pmb{|} \,\, \lfloor w \bar{L}_1^{*t} \rfloor ( \lfloor w \bar{L}_2^* \rfloor + I_{n-k} )^{-t} \right],
\end{equation}
with $\bar{L}_1^*$ and $\bar{L}_2^*$ defined as in \eqref{eq:construction}. If $k \geq n/2$, then the projection-vectors are simply given by the rows of

\begin{equation}
A = V = \left[I_{k} \,\, \displaystyle \left[ \begin{array}{c} \lfloor w L^* \rfloor  \\ \pmb{0} \end{array}\right]
 \right]
\end{equation}
i.e., the last $n-2k$ vectors are simply the canonical vectors $e_i$ for $i = k+1, \ldots, n-k$. This suggests a degree of freedom that could be used to improve the complexity given by Corollary \ref{cor}.

\begin{rmk} For $k = 1$, the construction described above is \textit{exactly} the Lifting Construction presented in \cite{FatStrut}.
\end{rmk}

\subsection{Rectangular Lattices}
Projections of the rectangular lattices $\Lambda_{\pmb{c}} = c_1 \mathbb{Z} \oplus \ldots \oplus c_n \mathbb{Z}$ are of particular interest for applications in communications as shown in \cite{CurvesTori}. To apply Theorem 1 to these lattices, let $\Lambda_2$ be a target lattice whose dual has $L^*$ as a lower triangular generator matrix. We define $\Lambda_w^*$ as the lattices generated by the matrices

\begin{equation}
L_w^* = \left[\begin{array}{ccccc}
\lfloor{w l_{11}^* c_1}\rfloor/c_1 & 1/c_2 & \ldots & \ldots & 0 \\
\lfloor{w l_{21}^* c_1}\rfloor/c_1 & \lfloor{w l_{22}^* c_2}\rfloor /c_2 & \ldots & \ldots & 0 \\
\vdots & \vdots & \ddots & \ldots & 0 \\
\lfloor{w l_{n1}^* c_1}\rfloor/c_1 & \lfloor{w l_{n2}^* c_2} \rfloor /c_2 & \ldots & \lfloor{c_{n-1} w l_{nn}^*} \rfloor /c_{n-1} & 1/c_n
\end{array}\right], w \in \mathbb{N}.
\end{equation}

By pre-multiplying $L_w^*$ by $H_w^{-1}$ we obtain a family of vectors $\pmb{v}_w \in c_1 \mathbb{Z} \oplus c_2 \mathbb{Z} \oplus \ldots \oplus c_n \mathbb{Z}$ such that the sequence of projections of $\Lambda_{\pmb{c}}$ is, up to isometry, arbitrarily close to $\Lambda_2$. To recover the projections of $\Lambda_{\pmb{c}}$, we just apply the projection operator. As an example, the sequence of projections of $\Lambda_{\pmb{c}}$ onto $\pmb{v}_w^\perp$, where

\begin{equation}
\pmb{v}_w = \left[
\begin{array}{ccc}
 p_1 & -\left\lfloor w p_1\right\rfloor  p_2 & -\left(\left\lfloor \frac{w p_1}{2}\right\rfloor -\left\lfloor w p_1\right\rfloor  \left\lfloor \frac{1}{2} \sqrt{3} w p_2\right\rfloor \right) p_3 \\
\end{array}
\right]
\end{equation}
is arbitrarily close, up to similarity, to the hexagonal lattice $A_2$, as ${w \to \infty}$.

\section{Conclusion and Open Questions}
In this paper, we extend the main theorem of \cite{Sloane2} by exhibiting projections of any $n$-dimensional lattice which are, up to similarity, arbitrarily close to any $(n-k)$-dimensional lattice. Our main theorem is constructive and makes use of geometric properties of dual lattices and intersections of lattices and hyperplanes. A natural question arising from our main result is how to speed up the convergence given by Corollary \ref{cor}. Another possible extension of our work includes using the projections techniques described here in order to approach the problem of finding the shortest non-zero vector of an arbitrary lattice (the so-called SVP problem, whose hardness is explored in some cryptographic constructions \cite{Mic}). For instance, if a projection of $\mathbb{Z}^{n}$ onto $\pmb{v}^{\perp}$, $\pmb{v} \in \mathbb{Z}^{n}$, is such that $\left\| {v} \right\|_1 = O(n^{\alpha})$, then it is possible to find its shortest vector with $O(n^{\alpha+1} )$ operations \cite{Sueli}.
\section{Acknowledgment}
The authors thank the reviewer's very pertinent comments and suggestions.

%% The Appendices part is started with the command \appendix;
%% appendix sections are then done as normal sections
%% \appendix

%% \section{}
%% \label{}

%% References
%%
%% Following citation commands can be used in the body text:
%% Usage of \cite is as follows:
%%   \cite{key}         ==>>  [#]
%%   \cite[chap. 2]{key} ==>> [#, chap. 2]
%%

%% References with bibTeX database:

\bibliographystyle{elsarticle-num}
\bibliography{general}

\begin{thebibliography}{10}
\expandafter\ifx\csname url\endcsname\relax
  \def\url#1{\texttt{#1}}\fi
\expandafter\ifx\csname urlprefix\endcsname\relax\def\urlprefix{URL }\fi
\expandafter\ifx\csname href\endcsname\relax
  \def\href#1#2{#2} \def\path#1{#1}\fi

\bibitem{Sloane2}
N.~J.~A. Sloane, V.~Vaishampayan, S.~I.~R. Costa, A note on projecting the
  cubic lattice, Discrete \& Computational Geometry 46 (2011) 472--478.

\bibitem{Sueli}
V.~A. Vaishampayan, S.~I.~R. Costa, Curves on a sphere, shift-map dynamics, and
  error control for continuous alphabet sources, IEEE Transactions on
  Information Theory 49 (2003) 1658--1672.

\bibitem{FatStrut}
N.~J.~A. Sloane, V.~A. Vaishampayan, S.~I.~R. Costa, The lifting construction:
  A general solution for the fat strut problem, in: IEEE International
  Symposium on Information Theory Proceedings (ISIT), 2010, pp. 1037 --1041.

\bibitem{Cassels}
J.~W.~S. Cassels, An introduction to the Geometry of Numbers, Springer-Verlag,
  1997.

\bibitem{Leker}
P.~M. Gruber, C.~G. Lekkerkerker, Geometry of Numbers, North-Holland, 1987.

\bibitem{Schmidt}
W.~Schmidt, Diophantine approximation and certains sequences of lattices, Acta
  Arith. 15 (1968/1969) 19--203.

\bibitem{Sloane1}
J.~H. Conway, N.~J.~A. Sloane, Sphere-packings, lattices, and groups,
  Springer-Verlag, New York, NY, USA, 1998.

\bibitem{Cohen}
H.~Cohn, N.~Elkies, \href{http://www.jstor.org/stable/3597215}{New upper bounds
  on sphere packings i}, The Annals of Mathematics 157~(2) (2003) pp. 689--714.
\newline\urlprefix\url{http://www.jstor.org/stable/3597215}

\bibitem{CurvesTori}
A.~Campello, C.~Torezzan, S.~I.~R. Costa, Curves on torus layers and coding for
  continuous alphabet sources, International Symposium on Information Theory
  (ISIT) (2012) 2127--2131.

\bibitem{Meyer}
C.~D. Meyer, {Matrix Analysis and Applied Linear Algebra}, Society for
  Industrial Mathematics (SIAM), Philadelphia PA, USA, 2000.

\bibitem{Perfect}
J.~Martinet, Perfect Lattices in Euclidean Space, Springer-Verlag, Berlin
  Heidelberg New York, 2003.

\bibitem{Mic}
D.~Micciancio, S.~Goldwasser, Complexity of Lattice Problems: a cryptographic
  perspective, Kluwer Academic Publishers, Boston, Massachusetts, 2002.

\end{thebibliography}

%% Authors are advised to submit their bibtex database files. They are
%% requested to list a bibtex style file in the manuscript if they do
%% not want to use elsarticle-num.bst.

%% References without bibTeX database:

% \begin{thebibliography}{00}

%% \bibitem must have the following form:
%%   \bibitem{key}...
%%

% \bibitem{}

% \end{thebibliography}

\end{document}